    \newtheorem{theorem}{Theorem}
    \newtheorem{lemma}{Lemma}
    \newtheorem{definition}{Definition}
    \newtheorem{remark}{Remark}
    \newenvironment{proof}{\textit{Proof.}}{\hfill$\square$}
    \letcs\replicate{prg_replicate:nn}
\begin{document} 
\title{Efficient explicit circuit for quantum state preparation of piecewise continuous functions}

\begin{abstract}
Efficiently uploading data into quantum states is essential for many quantum algorithms to achieve advantage across various applications. In this paper, we address this challenge by developing a method to upload a polynomial function $f(x)$ on the interval $x \in [-1,1]$ into a pure quantum state consisting of qubits, where a discretized $f(x)$ is the amplitude of this state. The preparation cost has $\mathcal{O}(n\log n)$ scaling in the number of qubits $n$ and linear scaling with the degree of the polynomial $Q$. This efficiency allows the preparation of states whose amplitudes correspond to high-degree polynomials (up to $10^4$), enabling accurate approximation of functions that admit efficient polynomial series representations and whose amplitude profiles are not extremely localized. We provide a fully explicit circuit realization, based on four real polynomials that meet specific parity and boundedness conditions. We extend this construction to cover piece-wise polynomial functions, a case not previously addressed explicitly in the literature, the algorithm scaling linearly with the number of piecewise parts. Our method achieves efficient quantum circuit implementation and we present detailed gate counting and resource analysis.
\end{abstract}

\author{Nikita Guseynov}
\email{guseynov.nm@gmail.com}
\affiliation{Global College, Shanghai Jiao Tong University, Shanghai 200240, China.}

\author{Nana Liu}
\email{nana.liu@quantumlah.org}
\affiliation{Institute of Natural Sciences, School of Mathematical Sciences, Shanghai Jiao Tong University, Shanghai 200240, China}
\affiliation{Ministry of Education Key Laboratory in Scientific and Engineering Computing, Shanghai Jiao Tong University, Shanghai 200240, China}
\affiliation{Shanghai Artificial Intelligence Laboratory, Shanghai, China}
\affiliation{Global College, Shanghai Jiao Tong University, Shanghai 200240, China}

 \maketitle
 
\section{Introduction}

Quantum computing \cite{FEYNMAN} holds significant promise for revolutionizing various fields such as
material science \cite{aspuru2005simulated}, computer science \cite{costa2019quantum,gaitan2020finding,jin2023time,linden2022quantum,jin2022quantum}, finance \cite{stamatopoulos2020option,gonzalez2023efficient} due to its potential to solve some
problems more efficiently than canonical methods on classical computers. The development of
efficient quantum algorithms is crucial to harnessing this power, enabling quantum
computers to outperform their classical counterparts \cite{zhang2022brief,nielsen2002quantum}. However, the effectiveness of
these algorithms can be severely undermined by inefficient data loading processes \cite{tang2021quantum}.

In this paper, we present an explicit and resource-transparent method to upload a continuous  
or piece-wise continuous function $f(x)$ defined on the interval  
$x \in [-1,1]$ (where $x$ is discretized along this interval  
with $\Delta x \sim 1/N$) into a digital qubit-based quantum  
state as the amplitudes of a normalized vector $\ket{\psi}^n$  
\begin{eqnarray}
\left| \psi \right\rangle^n := \frac{1}{N_f} \sum_{x=0}^{2^n-1} f(x_i) |x\rangle^n,
\label{eq:eq_1}
\end{eqnarray}
with a size $N=2^n$, where $n$ is the number of qubits.  
The domain $[-1,1]$ can always be generalized to any interval  
$[a,b]$ by a simple rescaling of $x$.

\begin{table}[H]
\centering
\begin{small}
\begin{tabular}{c| c| c |c |c}
\textbf{Name} & \textbf{Notation} & \textbf{Polynomial Degree} & \textbf{Gate Counts} & \textbf{Domain} \\
\hline
Exponent & $e^{\alpha x}$ & $\mathcal{O}\left(\frac{|\alpha| + \log(1/\epsilon)}{\log\left( 1+\frac{\log(1/\epsilon)}{|\alpha|} \right)}\right)$ & $\mathcal{O}\left(n\log n \cdot \frac{|\alpha| + \log(1/\epsilon)}{\log\left( 1+\frac{\log(1/\epsilon)}{|\alpha|} \right)}\right)$ & $x \in [-1, 1];\alpha\in \mathbb{C}$  \\
Cosine & $\cos(tx)$ & $\mathcal{O}\left( t + \frac{\log(1/\epsilon)}{\log\left(e + \frac{\log(1/\epsilon)}{t}\right)} \right)$ & $\mathcal{O}\left(n\log n \left( t + \frac{\log(1/\epsilon)}{\log\left(e + \frac{\log(1/\epsilon)}{t}\right)} \right)\right)$ & $x \in [-1, 1];t\in \mathbb{R}$ \\
Sine & $\sin(tx)$ & $\mathcal{O}\left( t + \frac{\log(1/\epsilon)}{\log\left(e + \frac{\log(1/\epsilon)}{t}\right)} \right)$ & $\mathcal{O}\left(n\log n \left( t + \frac{\log(1/\epsilon)}{\log\left(e + \frac{\log(1/\epsilon)}{t}\right)} \right)\right)$ & $x \in [-1, 1];t\in \mathbb{R}$ \\
Sigmoid & $\frac{1}{1 + e^{-2x}}$ & $\mathcal{O}\left(\log \left(1/\epsilon\right)\right)$ & $\mathcal{O}\left(n\log n \cdot\log \left(1/\epsilon\right)\right)$ & $x\in[-\pi/2,\pi/2]$ \\
Gaussian & $\frac{1}{\sigma \sqrt{2\pi}} e^{-\frac{1}{2}(x/\sigma)^2}$ & $\mathcal{O}\left(\frac{ \log\left(1/\epsilon\right) + \log \left(1/\sigma\right) }{ \log\left(1+2\sigma\log\left(1/\epsilon\right)\right)}\right)$ & $\mathcal{O}\left(n\log n\cdot\frac{ \log\left(1/\epsilon\right) + \log \left(1/\sigma\right) }{ \log\left(1+2\sigma\log\left(1/\epsilon\right)\right)}\right)$ & $x \in [-1, 1];\sigma>0$ \\
Bessel & $J_n(\alpha x)$ & $\mathcal{O}\left(\frac{ \log\left( 1/\epsilon \right) }{ \log\left( 1+\frac{2}{|\alpha|}\log\left( 1/\epsilon\right)  \right) }
\right)$ & $\mathcal{O}\left(n\log n \cdot \frac{ \log\left( 1/\epsilon \right) }{ \log\left( 1+\frac{2}{|\alpha|}\log\left( 1/\epsilon\right)  \right)}\right)$ & $x \in [-1, 1];\alpha\in \mathbb{R}$ \\
Reciprocal & $\frac{1}{x}$ & $\mathcal{O}\left(\delta \sqrt{\log^2\left(1/\epsilon\right) + \log\left(1/\epsilon\right) \log(\delta)}\right)$ & $\mathcal{O}\left(\delta n\log n \cdot  \sqrt{\log^2\left(1/\epsilon\right) + \log\left(1/\epsilon\right) \log(\delta)}\right)$ & $x \in [-1, -\frac{1}{\delta}] \cup [\frac{1}{\delta}, 1]$ \\
ReLU & $\text{ReLU}(x)$ & $\mathcal{O}(1)$ & $\mathcal{O}(n\log n)$ & $x \in \mathbb{R}$ \\
Leaky ReLU & $\text{Leaky ReLU}(x)$ & $\mathcal{O}(1)$ & $\mathcal{O}(n\log n)$ & $x \in \mathbb{R}$ \\
\end{tabular}
\end{small}
\caption{
  Polynomial Degree and Gate Counts for Various Functions \cite{childs2017quantum,abramowitz1968handbook}. The polynomial degree means that we retain only $k$ terms from the expansion for the function; the gate counts directly follow from the series expansions from Table~\ref{table:1} and Theorem~\ref{theorem: Linear combination of polynomials}. The main discretization parameter here is $n=\log_2N$ which is the number of qubits. The error term $\epsilon$ represents the maximum absolute value of the residual beyond $k$ terms in the series expansion across the given domain. The $\delta$ value for the Reciprocal functions is defined by its domain. For ReLU and Leaky ReLU we suggest using the piece-wise approach from the Appendix~\ref{appendix:piece-wise}.
}
\label{table:poly_degree_gate_counts}
\end{table}

Quantum states of this kind play a key role in numerous  
quantum algorithms. They are used in discrete-time quantum walk approaches~\cite{childs2010relationship}, and modeling electromagnetic scattering cross  
sections~\cite{scherer2017concrete}. They also appear in quantum algorithms for  
solving linear and nonlinear partial differential equations~\cite{cao2013quantum,leyton2008quantum},  
and in specifying priors for quantum phase estimation~\cite{berry2024analyzing}.  

While the digital quantum circuit model offers a universal platform,  
the fundamental operations available on current quantum hardware are  
typically restricted to one- and two-qubit gates, such as C-NOTs and single-qubit
rotations, which play a role analogous to the NOT and OR gates in  
classical computing \cite{nielsen2002quantum}. A nontrivial challenge in this framework is to  
determine how, starting from the initial state $\ket{0}^n$, one can  
efficiently combine only these fundamental gates to prepare a state  
whose amplitudes encode a prescribed set of function values, as in  
Eq.~(\ref{eq:eq_1}). This "uploading" task is far from automatic:  
it requires decomposing the functional map into a concrete gate sequence, while optimizing for circuit depth and hardware constraints. The difficulty is particularly pronounced for large $n$, where the direct uploading procedure would require storing $2^n$-array of data—an exponential classical cost. The subsequent quantum brute-force routine \cite{araujo2023configurable,biamonte2017quantum} to upload that array has gate cost
$\mathcal{O}(\sqrt{2^n})$ which is inefficient.

Since polynomials are efficiently computable classically, one might expect  
quantum arithmetic circuits to be the natural choice for state preparation.  
However, in practice, such circuits are significantly more expensive in  
terms of gate complexity and ancilla requirements. In addition, quantum  
arithmetic algorithms often require the implementation of functions such  
as the arccosine, which is technically challenging and resource-intensive. To avoid this costly overhead, we present a method that directly constructs  
state preparation unitaries using the polynomial structure itself, without  
relying on arithmetic subroutines~\cite{munoz2018t,haner2018optimizing}, offering practical circuit  
synthesis for encoding piecewise continuous functions into quantum states.

The resource cost of our method scales as $\mathcal{O}(Q n \log n)$,  
where $n$ is the number of qubits and $Q$ is the degree  
of the polynomial. This linear dependence on \(Q\) enables the efficient construction of
high-degree polynomials, allowing us to upload polynomial approximations of
functions with known series expansions (such as Taylor or Jacobi–Anger
expansions \cite{abramowitz1968handbook}) in regimes where these series are efficient and the filling
ratio \(\mathcal{F}\) remains appreciable. Furthermore, we extend the construction to accommodate piecewise polynomial functions while preserving  
the same scaling behavior. For a function divided into $G$ segments,  
the total resource scaling becomes $\mathcal{O}(Q n \log n+Gn+QG)$.

To achieve this, we approximate the target complex function as 
a superposition of four real polynomials with different parities. For each 
of those polynomial we build an independent state preparation unitary using 
the Alternating Phase Transition Sequence (APTS) \cite{gilyen2019quantum}, 
also known as qubitization \cite{dalzell2023quantum}. This procedure applies 
the target polynomial function on the eigenvalues $\lambda_i$ of a block-encoded 
simple matrix. In particular, we use the matrix with linear eigenvalues 
$\lambda_i\sim Ai+B$ as an input for the APTS. Finally, we use the four prepared block-encoded polynomials as an input for a linear combination of unitaries (LCU) \cite{dalzell2023quantum} to embed those polynomials into the desired function.

In addition, we suggest a method to construct a piecewise polynomial function, defined as an independent polynomial of different degrees in each sector. To realize this in practice, we developed an efficient procedure that uses an ancilla qubit as an indicator for each segment $-1 \leq a_1 \leq b_1 \leq 1$ of the domain $[-1, 1]$. For a given computational basis state $\ket{i}^n$, the unitary gate sets this ancilla to $|1\rangle$ if and only if the state belongs to the corresponding segment $x_i \in [a_1, b_1]$, effectively marking the target region. We then apply the controlled version of the state preparation algorithm described above, ensuring that the appropriate polynomial is uploaded only for the predetermined segment. After completing the segment-specific operation, we uncompute the indicator qubit, returning it to $|0\rangle$ so it can be reused later. This method enables independent polynomials of arbitrary degree to be assigned to each segment and preserves the linear scaling in terms of the polynomial degree $Q$.

This document is organized as follows: Section~\ref{section:preliminaries} provides the theoretical foundations, including definitions and essential mathematical tools. In Section~\ref{section: problem statement}, we present the main results of the paper. Section~\ref{section: related work} reviews alternative methods from the literature and compares them with our approach. Appendices~\ref{appendix: cooridnate-polynomial-oracle}, and \ref{appendix:piece-wise} detail our explicit algorithm for function uploading, outlining each step of the process.

\section{Preliminaries}\label{section:preliminaries}

Now we introduce some preliminary mathematical and quantum computing concepts that we will use, and the system of indexes we will employ. We start with defining the multi-control operator. 
    
    \begin{definition}[Multi-control operator]
    \label{def:multiconrol operator}
        Let $U$ be an $m$-qubit quantum unitary and $b$ a bit string of length $n$. We define $C_U^b$ as an $n+m$-qubit quantum unitary that applies $U$ to an $m$-qubit quantum register if and only the $n$-qubit quantum register is in state $\ket{b}$. In the Appendix~\ref{appendix: multi-control} we show how this gate can be implemented using $C^1_U$ operator, $16n-16$ single-qubit operations, and $12n-12$ C-NOTs.
        \[ C^b_U=\ket{b}^n\bra{b}^n\otimes U+\sum\limits_{\substack{i=0,\dots,2^n-1\\ i\neq b}}\ket{i}^n\bra{i}^n\otimes I^{\otimes m}  \]
    \end{definition}
   
    \begin{definition}[Pure ancilla]
    \label{def:pure ancilla}
    Let $U$ be an $n+m$-qubit unitary operator such that for any arbitrary $\ket{\psi}^n$
    \begin{equation}
        U\ket{0}^m\ket{\psi}^n = \ket{0}^m\ket{\phi}^n,
        \label{eq:aux_pure_ancilla}
    \end{equation}
    where $\ket{\phi}^n$ is some quantum state. Then we say that the $m$-qubit quantum register is pure ancilla for operator $U$, if before and after the action it is in the state $\ket{0}^m$.

    We note that from Eq.~(\ref{eq:aux_pure_ancilla}), there must exist an $n$-qubit unitary $V$ such that $\ket{\phi}^n = V\ket{\psi}^n$, i.e., on the subspace where the ancilla is in the $\ket{0}^m$ state, $U$ acts as $I^m \otimes V$.

    Throughout this paper, in all figures for quantum schemes, we depict wires corresponding to such pure ancillas as green dash-dot lines.
\end{definition}

    \begin{remark} 
    A pure ancilla is simply a qubit that is required to be present in the quantum computer for intermediate computation but does not need to be measured at the end of the algorithm. Since it is returned to the $\ket{0}$ state after use, it can be safely reused in subsequent computations or circuits without resetting. In Fig.~\ref{fig:Multi_control_qc} (c) we use $m-2$ auxiliary qubits for efficient C-NOT construction. Exploitation of this qubits is divided into two epochs: (i) entangling stage that allows us to achieve some sophisticated quantum states, (ii) purification stage that sets ancillas back to zero-state.
    \end{remark}

 \begin{definition}[Block-encoding(modified Definition 43 from \cite{gilyen2019quantum})]
        Suppose that $A$ is an $n$-qubit operator, $\alpha,\epsilon\in\mathbb{R}_+$; $a,s\in\mathbb{Z}_+$, and let $(a+s+n)$-qubit unitary $U$ be so that for any arbitrary quantum state $\ket{\psi}^n$
        \[ U\ket{0}^a\ket{0}^s\ket{\psi}^n=\ket{0}^a\ket{\phi}^{n+s};\qquad (\bra{0}^s\otimes I^{\otimes n})\ket{\phi}^{n+s}=\tilde{A}\ket{\psi}^n,\]
            then we say that $U$ is an $(\alpha,s,\epsilon)$-block-encoding of $A$, if
        \[ ||A-\alpha \tilde{A}||\leq\epsilon,\]
        where the $s$-qubit register, referred to as the workspace ancilla (Definition~\ref{definition:workspace_ancilla}), must be measured in the zero state to achieve the desired action on the system; if any other outcome is observed, the result is discarded.
        
        We underline that the unitary $U$ can exploit some auxiliary qubits setting them back to zero-state. Later we call such qubits  `pure ancillas' (Definition~\ref{def:pure ancilla}).  The general scheme of block-encoding is presented in Fig.~\ref{fig:block_encoding_general}.
    \end{definition}
    
\begin{figure}[H]
\centering
    \includegraphics[width=0.4\textwidth]{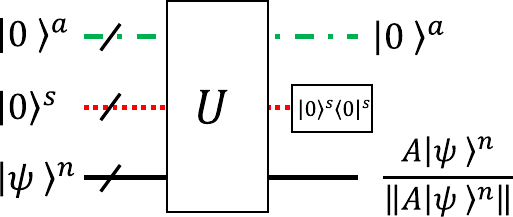}
    \caption{
        The general scheme of block-encoding for matrix $A$. The last operation on the second register indicates measurement of zero, projecting the main $n$-qubit register onto the desired state. 
        Wires for pure ancilla (Definition \ref{def:pure ancilla}) are depicted as green dash-dotted lines; workspace ancilla (Definition \ref{definition:workspace_ancilla}) are shown as red dotted wires; and the system register is represented by a solid black wire.
    }
    \label{fig:block_encoding_general}
\end{figure}

\begin{definition}[Workspace ancilla]\label{definition:workspace_ancilla}
Workspace ancilla are qubits initialized in $|0\rangle$ and  
added to the system register in block-encoding. They enable embedding  
a non-unitary operator $A$ as the top-left block of a unitary $U$:
\[
U = \begin{pmatrix}
A & * \\
* & *
\end{pmatrix}
\]
After $U$ is applied, workspace ancilla are not required to return to  
$|0\rangle$ and typically end in an unknown state (unlike pure ancilla).  
The action of $A$ is realized by projecting the workspace ancillas onto  
$|0\rangle^{\otimes a}$:
\[
\left( \langle 0|^{\otimes a} \otimes I \right) U \left( |0\rangle^{\otimes a} \otimes |\psi\rangle \right) = A|\psi\rangle.
\]
On quantum hardware, this corresponds to measuring all workspace ancillas  
and accepting only the all-zero result; otherwise, the system register  
state is discarded.

Throughout this paper, in all figures for quantum schemes, we depict wires corresponding to workspace ancillas as red dotted wires.
\end{definition}

\begin{definition}[Filling ratio \cite{rattew2022preparing}]\label{definition: filling ratio}
Given the usual definition for \( L^p \) function norms of a Riemann integrable, bounded function \( f \), \(\|f\|_p = \left[ \int_a^b |f(x)|^p dx \right]^{1/p}\), we define the \emph{filling ratio} of \( f \),

\[
\mathcal{F} := \frac{\|f\|^2_2}{ 2\|f\|_{\text{max}}^2}.
\]

Note that for continuous functions over a closed interval \(\|f\|_\infty =: \|f\|_{\text{max}}\) represents the absolute largest value of the function. The filling ratio has a clear geometrical interpretation shown in Fig.~\ref{fig:filling ratio}.

\begin{figure}[H]
\centering
    \includegraphics[width=0.5\textwidth]{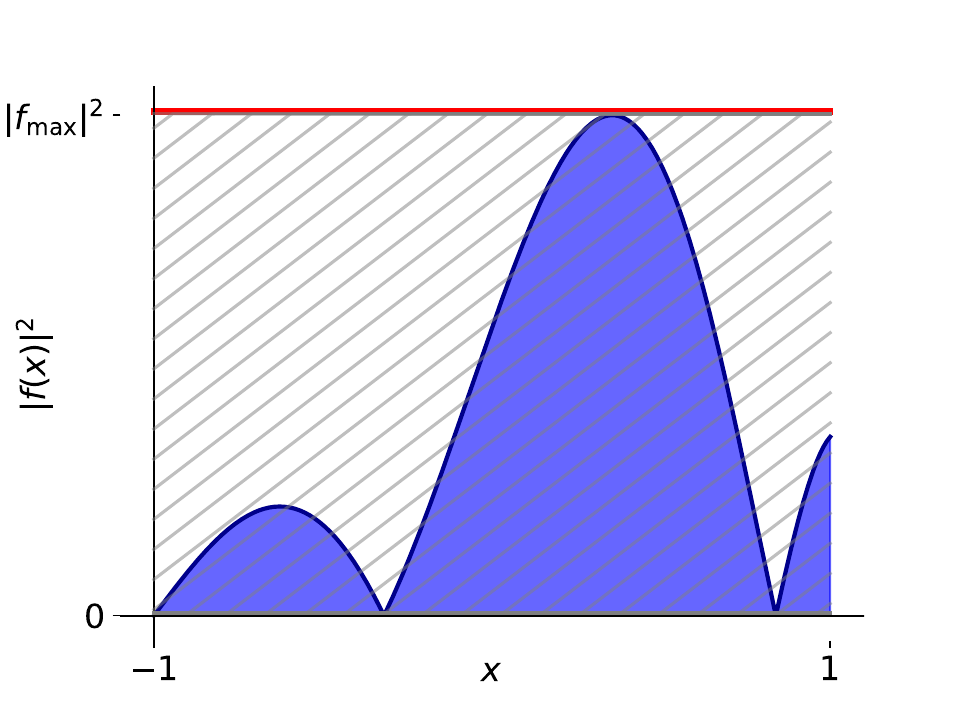}
    \caption{The geometrical interpretation of the filling ratio. The filling ratio is the area under the function $\abs{f(x)}^2$ (blue area) relative to its doubled absolute maximum value $2\abs{f_{\max}}^2$ (area in the dashed rectangle).}
    \label{fig:filling ratio}
\end{figure}
\end{definition}

\section{Results}\label{section: problem statement}

We begin with a continuous function $f(x)$ defined on the interval $[-1,1]$; $f: \mathbb{R} \rightarrow \mathbb{C}$. We assume that it has a  polynomial decomposition
\begin{equation}
    f(x)=\sum_{i=0}^Q\sigma_ix^i;\qquad \sigma_i\in\mathbb{C}.
    \label{eq:function_def}
\end{equation}
Polynomial decomposition is a crucial tool in both theoretical and applied mathematics due to its flexibility and effectiveness in approximating complex functions. As shown in the Table~\ref{table:poly_degree_gate_counts} and Table~\ref{table:1}, various functions can be effectively approximated using polynomial series expansions, each with specific error bounds and applicable domains.

Our main goal in this paper is to upload such a function into a digital qubit-based quantum computer as a normalized vector $\ket{\psi}^n\in\mathbb{C}^{2^n}$, defined as
\begin{equation}
    \ket{\psi}^n=\frac{\sum_{k=0}^{2^n-1} \sum_{i=0}^Q \sigma_ix^i_k\ket{k}^n}{\sqrt{\sum_{\kappa=0}^{2^n-1}\abs{\sum_{j=0}^Q\sigma_jx^j_\kappa}^2}},
    \label{eq:wave-function representation}
\end{equation}
which is a discrete version of Eq.~(\ref{eq:function_def}) with partitioning size $\Delta x=\frac{2}{2^n-1}$. Thus, given the discrete representation $x_k=-1+k\Delta x$, we can rewrite Eq.~(\ref{eq:wave-function representation}) as
\begin{equation}
    \ket{\psi}^n=\frac{\sum_{i=0}^Q\sum_{k=0}^{2^n-1}\sigma_i(a+k\Delta x)^i\ket{k}^n}{\sqrt{\sum_{\kappa=0}^{2^n-1}\abs{\sum_{j=0}^Q\sigma_j(a+\kappa\Delta x)^j}^2}}.
    \label{eq:wave-function representation_discrete from a to b}
\end{equation}

Our next step is to show that any polynomial of the form in Eq.~(\ref{eq:function_def}) can be written as a sum of four polynomials $P_s$ with the following properties:
\begin{enumerate}
    \item $P_{s}(x)$ has parity ($0$ or $1$ mod 2), which means that it has only odd or even degrees of $x$;\label{cond1}
    \item $P_{s}(x)$ is a real polynomial, which means that all the coefficients are real;\label{cond2}
    \item For all $x\in[-1,1]$: $\abs{P_{s}(x)}<1$.\label{cond3}
\end{enumerate}
Thus, we decompose Eq.~(\ref{eq:function_def}) into four real polynomials
\begin{equation}
    f(x)=\alpha_1P_1(x)+\alpha_2P_2(x)+i\alpha_3P_3(x)+i\alpha_4P_4(x);\qquad \alpha_s\in\mathbb{R},
    \label{eq:four_polyn}
\end{equation}
where $P_1(x)$ and $P_3(x)$ have parity 1, and $P_2(x)$ and $P_4(x)$ have parity 0. The third condition can be met by tuning $\alpha_i$. For such $f(x)$, we provide an explicit algorithm for preparing pure quantum states whose amplitudes are proportional to $f(x)$. We will also provide the exact gate count and show explicitly the required quantum circuits. The following theorems are the central theorems in this paper.

\begin{theorem}[Linear combination of polynomials]\label{theorem:intro 1}
Let $f(x)$ be a continuous function $f: \mathbb{R} \rightarrow \mathbb{C}$ with polynomial decomposition
\[f(x)=\sum_{i=0}^Q\sigma_ix^i=\alpha_1P_1(x)+\alpha_2P_2(x)+i\alpha_3P_3(x)+i\alpha_4P_4(x);\qquad \alpha_s\in\mathbb{R};\quad \sigma_i\in\mathbb{C},\]
where polynomials $P_s(x)$ satisfy the conditions \ref{cond1}, \ref{cond2}, \ref{cond3}. Then we can construct a unitary operation $U_f$ that prepares a $2^n$-discrete version of $f(x)$
\[\ket{\psi}^n=\frac{\sum_{i=0}^Q\sum_{k=0}^{2^n-1}\sigma_ix^i_k\ket{k}^n}{\sqrt{\sum_{\kappa=0}^{2^n-1}\abs{\sum_{j=0}^Q\sigma_jx^j_\kappa}^2}}\]
with probability $P\sim \mathcal{F}$ (see Definition~\ref{definition: filling ratio}) and resources no greater than:
\begin{enumerate}
    \item $\mathcal{O}(Qn\log_2n)$ quantum gates (C-NOTs and one-qubit rotations);
    \item $\lceil\log_2n\rceil+1$ pure ancillas.
\end{enumerate}
\end{theorem}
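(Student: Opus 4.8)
The plan is to assemble $U_f$ from three primitives already flagged in the introduction: a block-encoding of the diagonal ``coordinate matrix,'' the alternating phase transition sequence (APTS / QSVT) to apply each $P_s$, and a linear combination of unitaries (LCU) to recombine the four pieces. The central observation that makes the construction work is that conditions \ref{cond1}--\ref{cond3} (definite parity, real coefficients, and $|P_s|<1$ on $[-1,1]$) are exactly the hypotheses of the quantum signal processing existence theorem, so each $P_s$ can be realized by qubitizing a single, easily prepared matrix.

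First I would build the coordinate oracle, namely a block-encoding of the diagonal matrix $D=\sum_k x_k\ket{k}^n\bra{k}^n$ with $x_k=a+k\Delta x$. Because the normalization $\sum_k x_k^2=1$ is imposed in the text, the linear ramp $\sum_k x_k\ket{k}^n$ is already a unit vector and can be produced from the affine structure of $x_k$ with an elementary circuit; feeding this state into the diagonal block-encoding of amplitudes of \cite{rattew2023non} then returns a block-encoding of $D$. The matrix $D$ is Hermitian with eigenvalues $x_k\in[-1,1]$ (guaranteed by $-1<a<b<1$), so it is a legitimate input for the APTS. I expect this oracle to cost $O(n^2)$ gates and $O(n)$ pure ancillas, the quadratic factor tracing back to $O(n)$ controlled rotations, each implemented as a multi-control operator of linear cost via Definition~\ref{def:multiconrol operator}.

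Second, for each $s\in\{1,2,3,4\}$ I would pass the coordinate oracle through the APTS of \cite{gilyen2019quantum}. By conditions \ref{cond1}--\ref{cond3}, there exists a sequence of at most $\deg(P_s)\le Q$ phase factors producing a block-encoding of $P_s(D)=\sum_k P_s(x_k)\ket{k}^n\bra{k}^n$ using $O(Q)$ queries to the oracle and one phase ancilla, for $O(Qn^2)$ gates per polynomial. Acting on the uniform superposition $2^{-n/2}\sum_k\ket{k}^n$ yields a state proportional to $\sum_k P_s(x_k)\ket{k}^n$. I would then combine the four block-encoded polynomials with an LCU \cite{dalzell2023quantum} using two select ancillas, loading the coefficients $\alpha_1,\alpha_2,i\alpha_3,i\alpha_4$ into the PREPARE step so that post-selecting all ancillas on $\ket{0}$ leaves the target state of Eq.~(\ref{eq:wave-function representation_discrete from a to b}). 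The success probability is bounded below by a constant whenever the LCU one-norm $\sum_s|\alpha_s|$ is $O(1)$ relative to the state norm, which is where condition \ref{cond3} is used; a single round of amplitude amplification preserves $P=\mathcal{O}(1)$ without changing the $O(Qn^2)$ scaling. Counting $\sim n$ ancillas for the coordinate oracle, the APTS block-encoding and phase qubits, and the two LCU qubits, with registers reused across the four branches, should reproduce the claimed $2n-1$ pure ancillas.

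The main obstacle is not any individual primitive but the bookkeeping that makes the constants line up. Concretely, I expect the delicate steps to be: verifying that the coordinate oracle truly costs $O(n^2)$ so the degree-$Q$ APTS does not inflate the depth past $O(Qn^2)$; confirming that every auxiliary register is returned to $\ket{0}$ so it qualifies as a pure ancilla in the sense of Definition~\ref{def:pure ancilla}; and checking that the odd/even parities of $P_1,P_3$ versus $P_2,P_4$ remain consistent when all four branches share the same coordinate oracle and are merged by a single LCU, while the resulting normalization still leaves an $\mathcal{O}(1)$ measurement probability.
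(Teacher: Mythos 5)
Your proposal follows essentially the same route as the paper's proof: prepare the normalized linear-ramp state, convert it to a diagonal block-encoding of the coordinate matrix via the amplitude-encoding technique of \cite{rattew2023non}, apply each $P_s$ by the alternating phase modulation sequence with $O(Q)$ queries, recombine the four parity pieces by an LCU on two ancillas, and recover $\mathcal{O}(1)$ success probability by noting that all but a constant number of the auxiliary qubits are pure ancillas. The only place you are less explicit than the paper is the $O(n^2)$ construction of the ramp-state unitary, which the paper obtains via the Walsh--Hadamard transform reducing it to a binary-norm-one state preparation, but this does not affect the correctness of your argument.
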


\begin{proof}
    The proof is provided in Appendix~\ref{appendix: cooridnate-polynomial-oracle}. The central idea underlying the Theorem is to apply the Singular Value Transformation to a diagonal unitary operator with linear eigenvalues which represent coordinate values on the interval $[-1,1]$.
\end{proof}

\begin{remark}
    While our method does not rely directly on properties such as smoothness or Lipschitz continuity, these characteristics do influence the efficiency of the polynomial approximation. For example, smooth or Lipschitz-continuous functions can often be approximated to a given precision $\epsilon$ using lower-degree polynomials, which in turn reduces the gate complexity of our circuit. On the other hand, functions with discontinuities or sharp features typically require higher-degree polynomials or a piece-wise representation.
\end{remark}

\begin{theorem}[Piece-wise polynomial function]\label{theorem:intro 2}
Let $f(x)$ be a piece-wise continuous function $f: \mathbb{R} \rightarrow \mathbb{C}$ with polynomial decomposition
\[
f(x)=\begin{cases} 
f_1(x) = \sum_{i=0}^{Q_1} \sigma_i^{(1)} x^i = \alpha_1^{(1)} P_1^{(1)}(x) + \alpha_2^{(1)} P_2^{(1)}(x) + i \alpha_3^{(1)} P_3^{(1)}(x) + i \alpha_4^{(1)} P_4^{(1)}(x), & \text{if } K_1 \geq x \geq a \\
f_2(x) = \sum_{i=0}^{Q_2} \sigma_i^{(2)} x^i = \alpha_1^{(2)} P_1^{(2)}(x) + \alpha_2^{(2)} P_2^{(2)}(x) + i \alpha_3^{(2)} P_3^{(2)}(x) + i \alpha_4^{(2)} P_4^{(2)}(x), & \text{if } K_2 \geq x > K_1 \\
\multicolumn{2}{c}{\vdots} \\
f_G(x) = \sum_{i=0}^{Q_G} \sigma_i^{(G)} x^i = \alpha_1^{(G)} P_1^{(G)}(x) + \alpha_2^{(G)} P_2^{(G)}(x) + i \alpha_3^{(G)} P_3^{(G)}(x) + i \alpha_4^{(G)} P_4^{(G)}(x), & \text{if } b \geq x > K_{G-1}
\end{cases};
\]
\[\sigma_i^{(j)}\in\mathbb{C};\quad \alpha_s^{(j)}\in\mathbb{R}\]
where $G$ - is the number of the pieces, and $Q_g$ - is the polynomial degree in the piece $g$ with $\max_g Q_g=Q_{\max}$. The polynomials $P^{(j)}_s(x)$ satisfy the conditions \ref{cond1}, \ref{cond2}, \ref{cond3}. Then we can construct a unitary operation $U_f$ that prepares a $2^n$-discrete version of $f(x)$ with probability $P\sim\mathcal{F}$ (see Definition~\ref{definition: filling ratio}) and resources no greater than:
\begin{enumerate}
    \item $\mathcal{O}(Q_{\max}n\log_2n+Gn+Q_{\max}G)$ C-NOTs and one-qubit operations;
    \item $n+\lceil \log_2G\rceil-1$ pure ancillas.
\end{enumerate}
\end{theorem}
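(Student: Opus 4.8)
The plan is to reduce the piece-wise problem to the single-polynomial construction already established, applied one piece at a time and stitched together by a telescoping product of windowed transforms. As in the single-piece case, the final step will be to build a block-encoding of the diagonal operator $D=\mathrm{diag}(h(x_k))$, where $h(x_k)=f_g(x_k)$ whenever $x_k$ lies in piece $g$, apply it to the uniform superposition $\frac{1}{\sqrt{2^n}}\sum_k\ket{k}^n$, and post-select the block-encoding ancillas on $\ket{0}$. On the discrete grid $x_k=a+k\Delta x$ each piece occupies a contiguous block of indices $S_g=\{k:k_{g-1}\le k<k_g\}$, with the integer boundaries $k_g$ fixed by the thresholds $K_g$; these blocks are disjoint and partition $\{0,\dots,2^n-1\}$. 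Since each $f_g$ admits the four-polynomial decomposition obeying conditions \ref{cond1}, \ref{cond2}, \ref{cond3}, the per-piece machinery of the single-polynomial theorem (coordinate/diagonal block-encoding, Singular Value Transformation, and the linear combination over the four parities) applies verbatim to $f_g$.

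The key step is to assemble $D$ as an ordered product of $G$ \emph{windowed} polynomial transforms rather than by a linear combination of unitaries over pieces. For each $g$ I would (i) compute a single flag qubit $c=[\,k\in S_g\,]$ by an $O(n)$ comparator acting on the index register, (ii) run the Singular Value Transformation for the degree-$Q_g$ polynomials of $f_g$ with every primitive controlled on $c$, and (iii) uncompute $c$. Controlling on $c$ leaves indices outside the window untouched, so this produces a block-encoding of $M_g=\Pi_g\,f_g(\mathrm{diag}\,x)+(I-\Pi_g)$, where $\Pi_g=\sum_{k\in S_g}\ket{k}\bra{k}$. Because every $M_g$ is diagonal the factors commute, and because the windows are disjoint each index $k\in S_{g^\ast}$ is transformed by the single factor $M_{g^\ast}$ and passed through as the identity by all others; hence the product $M_G\cdots M_1$ collapses exactly to $\mathrm{diag}(h(x_k))=D$.

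For the resource count, round $g$ uses $O(Q_g)$ calls to the coordinate block-encoding at $O(n^2)$ gates each, an $O(n)$ comparator, and the $O(1)$ four-parity combination internal to $f_g$, i.e.\ $O(Q_g n^2)$ C-NOTs and one-qubit rotations; summing over pieces gives $O(\sum_{g=1}^G Q_g n^2)$. Crucially, the comparator flag, the Singular Value Transformation ancillas, the coordinate-oracle ancillas, and the two four-parity combination qubits are all reset after each round and reused, so the ancilla budget never grows with $G$ and stays at the $2n-1$ of the single-piece construction. Disjointness of the $S_g$ means there is no amplitude cancellation between pieces, so the single post-selection succeeds with probability $P=\mathcal{O}(1)$, exactly as before; the remaining details follow the scheme in Appendix~\ref{appendix:piece-wise}.

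I expect the main obstacle to be verifying that the controlled (windowed) transform composes as claimed: one must check that controlling the entire alternating phase sequence on the comparator flag yields the identity (not the zero operator) off the window, so that the product genuinely telescopes to the piece-wise polynomial instead of annihilating amplitudes, and that the subnormalization factors $\alpha^{(g)}_s$ are made consistent across pieces so that one global post-selection reproduces the correctly weighted discretized $f$. A secondary, more routine point is the comparator for the contiguous range $[k_{g-1},k_g)$: confirming its $O(n)$ gate cost with only $O(1)$ reusable ancillas, so that it is dominated by the $O(Q_g n^2)$ transform cost.
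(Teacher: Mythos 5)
Your proposal is correct and follows essentially the same route as the paper: an $O(n)$ comparator computes a piece-indicator flag from the index register, the controlled version of the single-piece polynomial circuit (Theorem~\ref{theorem: Linear combination of polynomials}) is applied for each piece, the flag is uncomputed so all ancillas are reused, and one global post-selection gives $P=\mathcal{O}(1)$ with cost $\mathcal{O}(\sum_g Q_g n^2)$. The only (cosmetic) difference is that you phrase the construction as a sequential product of windowed transforms $M_G\cdots M_1$, whereas the paper computes the indicator once and applies a select-style bank of controlled oracles before uncomputing it; for disjoint windows on a diagonal target these coincide.
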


\begin{proof}
    The proof is provided in Appendix~\ref{appendix:piece-wise}. The main idea is to prepare an auxiliary register in the state $\ket{1}$ for the interval \( K_{i} \geq x > K_{i-1} \), then apply the controlled version of the quantum circuit from Theorem~\ref{theorem: Linear combination of polynomials}.
\end{proof}

A subtle limitation of our approach arises when the filling ratio
$\mathcal{F}$ (Definition~\ref{definition: filling ratio}) becomes exponentially small. For typical continuous
or piecewise–continuous functions this does not occur, and the resource
scaling in Theorems~\ref{theorem:intro 1} and~\ref{theorem:intro 2} remains valid. However, for highly
localized functions whose weight concentrates on only a few grid
points, one may have $\mathcal{F}=\mathcal{O}(2^{-n})$, in which case the
postselection inherent in the block–encoding must be repeated
$1/\mathcal{F}=\mathcal{O}(2^n)$ times undermining the algorithm efficiency. These cases lie outside the main regime of interest
for this work, but we list two strategies how the filling ratio can be boosted by simple
preprocessing of the input amplitudes, exploiting the fact that our
framework yields with unitary which acts on the state $|+\rangle^{\otimes n}$. We outline representative strategies:

\begin{itemize}
\item \textbf{Sectoring.} When $f(x)$ contains a narrow peak near $s$,
one may replace the uniform input state $|+\rangle^{\otimes n}$ with a
two--level, piecewise--constant amplitude profile
$|g\rangle=\sum_{k} g(x_{k})\,|k\rangle$, where $g(x)$ takes a small value
$C_{1}$ outside $[s-\varepsilon,s+\varepsilon]$ and a larger value $C_{2}$
inside it. Such states correspond to a single rectangular ``feature'' and
can be prepared efficiently using the feature--based state--preparation
method~\cite{PhysRevA.73.012307}. Thus, the framework considered in our paper implements the
effective function $f'(x)=f(x)/g(x)$, which has a larger filling ratio,
though it requires a separate polynomial decomposition of $f'$, while
postselection still recovers the normalized state corresponding to the
original function $f(x)$.
\item \textbf{Approximate-profile compensation.} Suppose one can
efficiently prepare a simple function $g(x)$ approximating $f(x)$, and
has a polynomial decomposition of the ratio $f(x)/g(x)$. We first upload
$g$ as the input state and then apply our function-uploading procedure
to $f/g$, thereby transforming $|g\rangle$ into the desired state for
$f$. Since $g$ captures the dominant shape, the effective filling ratio
$\mathcal{F}$ of $f/g$ is significantly increased, including the case of exponentially
decaying functions. Sectoring corresponds to the special case where
$g$ is a two-level profile.
\end{itemize}

These techniques mitigate the exponentially small–$\mathcal{F}$ regime for certain cases, although
a full treatment is beyond the scope of this paper.

\section{Related Work}\label{section: related work}
There are numerous methods addressing the problem at hand.
Ref.~\cite{araujo2023configurable} proposes an algorithm
for uploading any vector (without an analytical expression)
with circuit depth and width of $\mathcal{O}(\sqrt{2^n})$,
which is crucial for fields like quantum machine learning
\cite{biamonte2017quantum}, where a quantum computer
operates on arbitrary states. Additionally, various
optimization approaches exist; one promising method
\cite{creevey2023gasp} uses a genetic algorithm to
iteratively build the desired quantum state. The authors
conjecture this method may scale as $\mathcal{O}(\mathrm{poly}(n))$.

Among other approaches, Ref.~\cite{araujo2023low} leverages
Schmidt decomposition. The complexity of this algorithm
depends on the Schmidt rank and can significantly reduce
the number of CNOT gates. The complexity is
$\mathcal{O}(2^m)$, where $m$ is the Schmidt measure.

Our approach not only prepares continuous functions that admit efficient
polynomial approximations and reasonably large filling ratio \(\mathcal{F}\),
but also provides an explicit method for constructing piece-wise polynomial
functions. This is achieved with
no change to the asymptotic scaling, offering a scalable
solution for approximating non-smooth or segmented data.
To the best of our knowledge, none of the below-mentioned
methods address this important instance explicitly, making
our method a uniquely flexible and practical alternative.

A particularly promising method is proposed in
Ref.~\cite{rosenkranz2024quantum}, where the authors
construct functions using a linear combination of
block-encodings for Fourier and Chebyshev series. Their
gate complexity is $\mathcal{O}(Qn\log n)$ for Chebyshev
series of degree $Q$, and $\mathcal{O}(dn\log d)$ for
Fourier series with harmonic $d$. The key difference
with our method lies in how polynomial superpositions
are formed. They employ LCU, while we rely on QSVT. Our method also requires fewer workspace ancillas: 
$\mathcal{O}(\log n)$ vs.~$\mathcal{O}(\log n + \log Q)$
or $\mathcal{O}(2\log d + 1)$, simplifying implementation.

Ref.~\cite{gonzalez2024efficient} is based on diagonal
block-encoding of amplitudes \cite{rattew2023non,
guo2021nonlinear} applied to a linearly-dependent circuit,
and then modified using singular value transformation.
While this algorithm is conceptually similar to ours,
it has worse complexity: $\mathcal{O}(Qn^2)$ versus
our $\mathcal{O}(Qn\log n)$. It also requires
$\mathcal{O}(n)$ workspace ancillas.

Another important early method is the Grover-Rudolph algorithm \cite{grover2002creating}, which prepares quantum states corresponding to integrals of a given probability density function. This method scales as $\mathcal{O}(poly(n))$ for certain smooth distributions and provides a strong foundation for amplitude encoding. Additionally, it does not depend on the filling ratio $\mathcal{F}$; its probability of success is $1$.

The main disadvantage of the Grover--Rudolph approach is its reliance
on quantum amplitude oracles that prepare conditional probabilities of
the form
\begin{equation}
f(i) = 
\frac{
    \displaystyle\int_{x_L^i}^{\frac{x_R^i + x_L^i}{2}}
    p(x)\,dx
}{
    \displaystyle\int_{x_L^i}^{x_R^i}
    p(x)\,dx
}
\end{equation}
where $x_L^i, x_R^i$ are subinterval endpoints and $p(x)$ is a target
probability density. This formulation assumes access to efficient
integration or a known cumulative distribution function (CDF), which
limits the method’s applicability. In each recursive step, the method
requires the transformation
\begin{equation}
\sqrt{p_i^{(m)}}\,|i\rangle\,|0\cdots 0\rangle 
\longrightarrow 
\sqrt{p_i^{(m)}}\,|i\rangle\,|\theta_i\rangle,
\qquad
\theta_i = \arccos\left( \sqrt{f(i)} \right)
\end{equation}
Now assume the target amplitude function is a polynomial
\begin{equation}
\sqrt{p(x)} = \sum_{k=0}^{Q} a_k\, x^k;\qquad
p(x) = \left( \sum_{k=0}^Q a_k\, x^k \right)^2
= \sum_{m=0}^{2Q} b_m\, x^m
\end{equation}

Then the conditional probability becomes
\begin{align}
f(i) &= 
\frac{\text{poly}_{2Q+1}}{\text{poly}_{2Q+1}};
\qquad
\theta_i = 
\arccos\left( \sqrt{
    \frac{
        \text{poly}_{2Q+1}
    }{
        \text{poly}_{2Q+1}
    }
} \right).
\end{align}

Thus, although we started with a seemingly simpler goal—quantum
state preparation for a degree $Q$ polynomial—the method ends
up requiring the construction of oracles for much more complicated
functions of degree $2Q+1$ and transcendental transformations. This significantly increases both classical and quantum implementation complexity. The authors do
not specify how to construct such oracles, which is nontrivial in
practice.


Earlier QSVT–based approach~\cite{mcardle2022quantum} requires approximating the composed function \(f(\arcsin(\cdot))\). This composition tends to make polynomial approximation harder: ensuring boundedness and parity on \([-1,1]\) is more delicate, coefficients can be less well-conditioned, and achieving the same approximation error may require a higher polynomial degree \(Q\).

Other lines of work emphasize regimes with small heralded success probabilities~\cite{li2023efficient} or concentrate on low-degree polynomial cases.





\section*{Acknowledgments}

N. Liu acknowledges funding from the Science and Technology Program of  
Shanghai, China (21JC1402900), NSFC grants  
No. 12471411 and No. 12341104, the Shanghai Jiao Tong University  
2030 Initiative, and the Fundamental Research Funds for the  
Central Universities. N. Guseynov acknowledges funding from NSFC grant W2442002.

\bibliography{references}

\appendix
\section{Explicit quantum circuits for the multi-control unitaries}\label{appendix: multi-control}

\begin{figure}[H]
\begin{centering}

    \subcaptionbox{}{\includegraphics[width=0.283\textwidth]{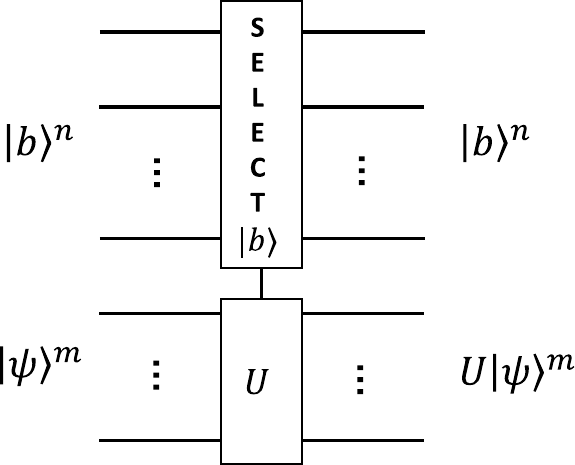}}
    \hspace{0.1\textwidth}
    \subcaptionbox{}{\includegraphics[width=0.48\textwidth]{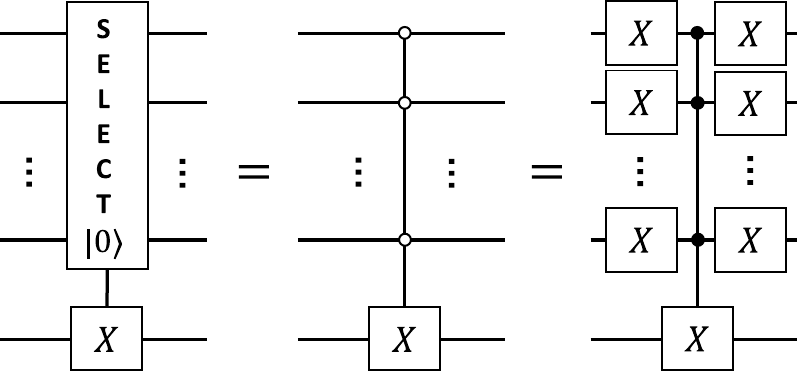}}
    \vspace{2.0em}
    \subcaptionbox{}{\includegraphics[width=0.68\textwidth]{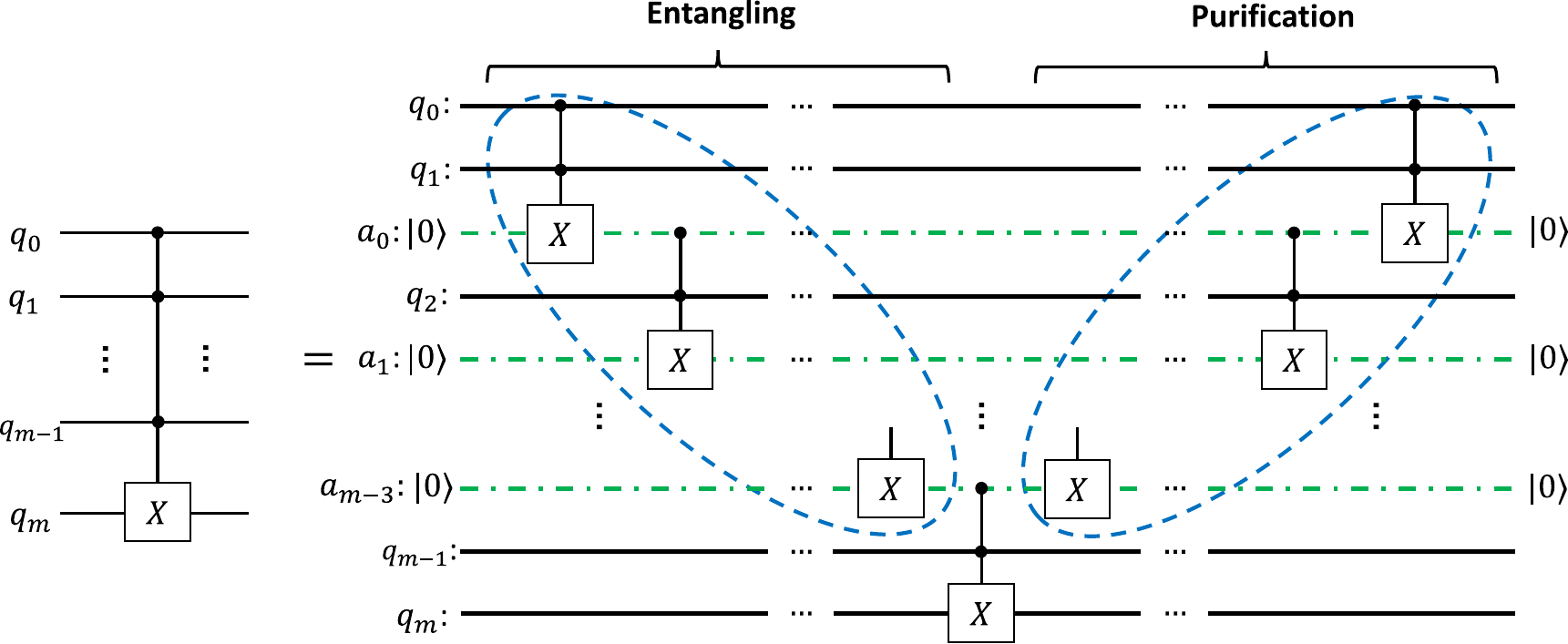}}

\end{centering}
    \caption{(a) The general view of multi-control operator $C_U^b$ from the definition \ref{def:multiconrol operator}. Wires for pure ancilla are depicted as green dash-dotted lines; and the system register is represented by a solid black wire. Select operation means that the operator $U$ is applied only if the state on the upper register is $\ket{b}^n$ (b) Explicit quantum circuit for implementation $C_X^{00\dots 0}$. (c) Explicit quantum circuit for implementation $C^{11\dots1}_X$ which we use everywhere in this paper. We use $2m-3$ Toffoli application and $m-2$ pure ancillas.}
    \label{fig:Multi_control_qc}
\end{figure}

In this section, we explicitly demonstrate how we address multi-control operations and gate counting. The general structure of a multi-control operator is illustrated in Fig.~\ref{fig:Multi_control_qc}. It is known \cite{nielsen2002quantum} that Toffoli gate can be realized using $6$ C-NOTs and $8$ single-qubit operations. Consequently, the resource requirement for a single Multi-control operator includes no more than one $C^1_U$ operator, $16n-16$ single-qubit operations, and $12n-12$ C-NOTs, simplifying the complexity of $C^{b}_X$ to $C^{11\dots 1}_X$ for simplicity. Additionally, we utilize $n-1$ pure ancillas. In this paper, we compute the complexity of the controlled unitary $C^1_U$ based on the gate complexity of $U$ by substituting all C-NOT gates with Toffoli gates (each implemented using $6$ C-NOTs and $8$ single-qubit operations), and by replacing single-qubit gates with their controlled versions (requiring $2$ C-NOTs and $2$ single-qubit rotations), following the standard construction in \cite{nielsen2002quantum}.

\section{Efficient Series Expansions and Error Terms}

In this Appendix, we provide an overview of efficient series expansions for common functions and their corresponding error terms. Table~\ref{table:1} presents the notation, series expansions, error expressions, and domains for a variety of functions such as exponent, trigonometric functions, and others. Below, we define the terms and explain how the error is computed.

\begin{itemize}
    \item Series Expansion: Each function is approximated by truncating its series expansion up to $k$ terms. For example, the expansion for the exponential function $e^x$ is $\sum_{s=0}^k\frac{x^s}{s!}$. These expansions provide approximate values for the functions within a specific domain.
    \item Error Expression: The error term $\epsilon$ represents the residual sum of the terms beyond $k$ in the series expansion. For each function, the error provides a measure of the approximation accuracy.
    \item Norm of the Error: The error terms are presented in the infinity norm, denoted as $\|\cdot\|_{\infty}$, which corresponds to the maximum absolute value of the error across the given domain.
\end{itemize}

\begin{table}[H]
\centering
\begin{tabular}{|c|c|c|c|c|}
\hline
\textbf{Name} & \textbf{Notation} & \textbf{Series Expansion} & \textbf{Error Expression} & \textbf{Domain} \\
\hline
Exponent & $e^{\alpha x}$ & $\displaystyle\sum_{s=0}^{k} \frac{(\alpha x)^s}{s!}$ & $\displaystyle\epsilon \approx\frac{e^{|\alpha|+k+1}\, |\alpha|^{k+1}}{\sqrt{2\pi (k+1)}\, (k+1)^{k+1}}$ & $x \in [-1, 1]; \alpha \in \mathbb{C}$ \\
\hline
Cosine & $\cos(tx)$ & $J_0(t) + 2 \sum_{s=1}^{k} (-1)^s J_{2s}(t) T_{2s}(x)$ & $\epsilon \leq \frac{1.07}{\sqrt{2(k+1)}} \left( \frac{e|t|}{4(k+1)} \right)^{2(k+1)}$ & $x \in [-1, 1]; t \in \mathbb{R}$ \\
\hline
Sine & $\sin(tx)$ & $2 \sum_{s=0}^{k} (-1)^s J_{2s+1}(t) T_{2s+1}(x)$ & $\epsilon \leq \frac{1.07}{\sqrt{2(k+1)}} \left( \frac{e|t|}{4(k+1)} \right)^{2(k+1)}$ & $x \in [-1, 1]; t \in \mathbb{R}$ \\
\hline
Gaussian & 
$\frac{1}{\sigma\sqrt{2\pi}}\, e^{-\frac{x^2}{2\sigma^2}}$ & 
$\frac{1}{\sigma\sqrt{2\pi}}\, \displaystyle\sum_{s=0}^{k} \frac{(-1)^s}{s!} \left( \frac{x^2}{2\sigma^2} \right)^s$ & 
$\displaystyle\epsilon \approx \frac{1}{\sigma 2\pi \sqrt{k+1}}
\left(
\frac{e}{2(k+1)\sigma^2}
\right)^{k+1}$ & 
$x \in [-1, 1]; \sigma > 0$ \\
\hline
Sigmoid & $\sigma(x) = \frac{1}{1 + e^{-x}}$ & $\frac{1}{2} + \frac{1}{2} \sum_{s=1}^{k} \frac{2^{2s}(2^{2s}-1)B_{2s}}{(2s)!} \left( \frac{x}{2} \right)^{2s-1}$ & $\epsilon \approx \left| \frac{2^{2(k+1)}(2^{2(k+1)}-1)B_{2(k+1)}}{(2(k+1))!} \right|$ & $x \in [-\pi/2, \pi/2]$ \\
\hline
Reciprocal & $\frac{1}{x}$ & $4 \sum_{j=0}^{k} (-1)^j \sum_{s=j+1}^{K} \frac{\binom{2K}{K+s}}{2^{2K}} T_{2j+1}(x)$ & $\epsilon \leq \exp\left( -\frac{-\log(\delta) + \sqrt{\log(\delta)^2 + 4 \left( \frac{k}{\delta} \right)^2}}{2} \right)$ & $x \in [-1, -\frac{1}{\delta}] \cup [\frac{1}{\delta}, 1]$ \\
\hline
Bessel & $J_n(\alpha x)$ & $\sum_{s=0}^{k} \frac{(-1)^s}{s! \, \Gamma(s+n+1)} \left(\frac{\alpha x}{2}\right)^{2s+n}$ & $\epsilon \approx \frac{\left(\frac{|\alpha|}{2}\right)^{2(k+1)+n} e^{k + n + 1}}{2\pi \, (k+1)^{k + \frac{3}{2}} (k+n+1)^{k + n + 2}}$ & $x \in [-1, 1]; \alpha \in \mathbb{R}$ \\
\hline
Hyp. Tan. & $\tanh(x)$ & $\sum_{s=1}^{k} \frac{2^{2s} (2^{2s} - 1) B_{2s} x^{2s-1}}{(2s)!}$ & $\epsilon \approx \left| \frac{2^{2(k+1)} (2^{2(k+1)} - 1) B_{2(k+1)} x^{2(k+1)-1}}{(2(k+1))!} \right|$ & $x \in [-\pi/2, \pi/2]$ \\
\hline
\end{tabular}
\caption{Efficient Series Expansions and Error Terms \cite{childs2017quantum,abramowitz1968handbook}; as the error expression we use the absolute value of the number of terms retained $\abs{\sum_{k+1}^\infty\dots}$. 
  $J_n(x)$: Bessel function of the first kind; 
  $T_n(x)$: Chebyshev polynomials of the first kind; 
  $B_n$: Bernoulli numbers; 
  $\Gamma(x)$: Gamma function. 
  We approximate each function using its series expansion up to $k$ terms. 
  For the reciprocal function, $K = \left\lceil \delta^2 \log(\delta \epsilon^{-1}) \right\rceil$.
}
\label{table:1}
\end{table}

\section{Coordinate-polynomial-oracle}\label{appendix: cooridnate-polynomial-oracle}

The primary objective of this paper is to construct a function 
defined on the interval $[-1,1]$. We begin by constructing an amplitude oracle for the coordinate operator $\hat{x}$ on the interval $[-1,1]$. The operator $\hat{O}_x$ acts as follows:
\begin{eqnarray}
    \hat{O}_x\ket{0}^\lambda\ket{k}^n=\frac{1}{\mathcal{N}_x}x_k\ket{0}^\lambda\ket{k}^n+J_{x}\ket{\bot_0}^{n+\lambda};\quad x_k=-1+\frac{2k}{2^n-1},
    \label{eq: x Coordinate oracle}
\end{eqnarray}
where $\ket{\bot_0}^{n+\lambda}$ denotes a state orthogonal 
to $\ket{0}^\lambda \otimes I^{\otimes n}$ meaning $\forall\ket{\psi}^n:\bra{\bot_0}^{n+\lambda}\left(\ket{0}^\lambda\otimes\ket{\psi}^n\right)=0$. The second term involving $J_x$ corresponds to the orthogonal component introduced by the block-encoding structure and ensures unitarity of the oracle $\hat{O}_x$.
While this term can, in principle, be explicitly reconstructed from the following quantum circuits, we omit its full expression here as it is not important for the understanding of the algorithm and would be unnecessarily cumbersome.

Thus, $\hat{O}_x$ 
provides a $(\mathcal{N}_x, \lambda, 0)$-block-encoding of the 
coordinate operator $\hat{x}$. In this paper we construct this operator using the LCU technique:
\begin{eqnarray}
\hat{x} = \begin{pmatrix}
-1 &  &  \\
 & -1+\frac{2}{2^n-1} &  \\
 &  & \ddots \\
 &  &  & 1
\end{pmatrix}=-\sum_{j=0}^{n-1}\frac{2^j}{2^n-1}Z_j;\quad Z_j=I\otimes \dots \otimes I\otimes Z\otimes \underbrace{I\otimes \dots \otimes I}_{j \text{ times}},
\label{eq: LCU for x}
\end{eqnarray}
where $Z=\begin{pmatrix}
    1&0\\
    0&-1
\end{pmatrix}$ is the Pauli operator.

\begin{lemma}[Amplitude-oracle for coordinate operator (Appendix A from \cite{guseynov2023depth})]\label{lemma coordinate oracle}
Let $2^n\times 2^n$ matrix $\hat{x}=diag(-1,-1+\frac{2}{2^n-1},\dots,1)$ be a $n$-qubit finite-difference representation of coordinates as in the Eq.~(\ref{eq: LCU for x}), then we can construct a $(1,\lceil \log_2n\rceil,0)$-block-encoding of the matrix $\hat{x}$:
\[
 \hat{O}_x\ket{0}^\lambda\ket{k}^n=x_k\ket{0}^\lambda\ket{k}^n+J_{x}\ket{\bot_0}^{n+\lambda};\quad x_k=-1+\frac{2k}{2^n-1};\quad \lambda=\lceil\log_2n\rceil,
\]
with resources no greater than:
\begin{enumerate}
    \item $16n\lceil \log_2n\rceil-16n$ one-qubit operations;
    \item $12n\lceil \log_2n\rceil-16n$ C-NOTs;
    \item $\lceil \log_2n\rceil-2$ pure ancillas.
\end{enumerate}
\end{lemma}

\begin{proof}
We prove this lemma by demonstrating the circuit in Fig.~\ref{fig: O_x Amplitude oracle for x}, which implements the 
LCU technique to build the desired operator.
\begin{figure}[H]
    \centering
        \includegraphics[width=0.7\textwidth]{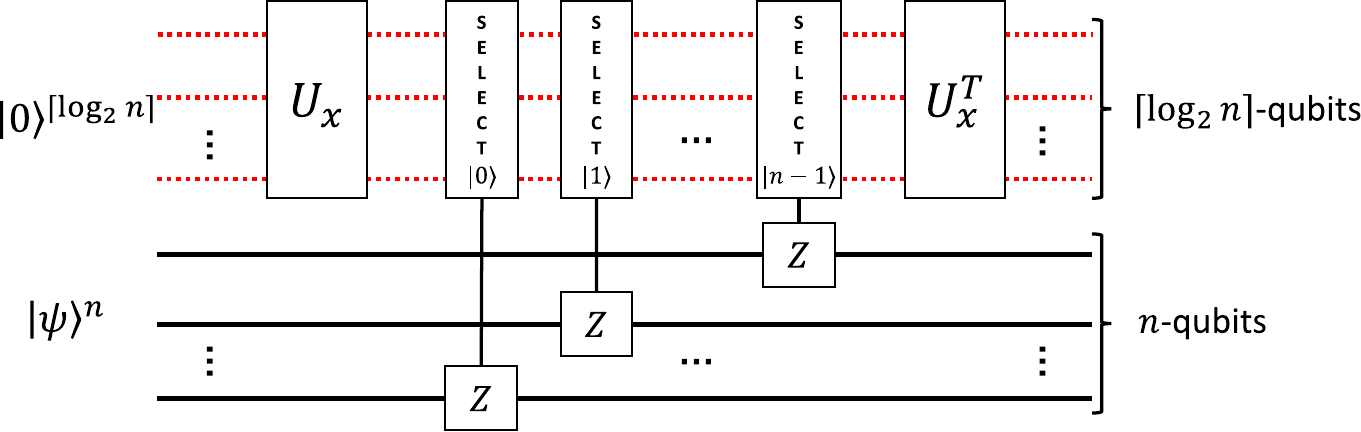}
    \caption{A construction of the Amplitude-oracle for $2^n\times 2^n$ matrix $\hat{x}$ which is $(1,\lceil \log_2n\rceil,0)$-block-encoding. The construction using LCU technique as in Eq.~(\ref{eq: LCU for x}). The unitary $U_x$ is defined in the Eq.~(\ref{eq:aux superposition for O_x}) with its transposed version $U_x^T$. Workspace ancilla are shown as red dotted wires; and the system register is represented by a solid black wire.}
    \label{fig: O_x Amplitude oracle for x}
\end{figure}

The implementation utilizes the auxiliary unitary:
    \begin{eqnarray}
        U_x=\frac{i}{\sqrt{{2^{n}-1}}}\sum_{j=0}^{n-1}2^{j/2}\ket{j}^\lambda
        \label{eq:aux superposition for O_x}
    \end{eqnarray}
    The construction of $U_x$ and $U_x^T$ addresses the general problem of construction an arbitrary quantum state. The authors of \cite{plesch2011quantum,bergholm2005quantum} suggest an explicit implementation of $U_x$ which consist of less than $\frac{23}{24}n$ C-NOTs for even number of qubits and $n$ C-NOTs for the odd number. Let's consider the number of one-qubit operations here. Naively, one can conclude that each C-NOT can be accompanied by $6$ Euler rotations; however, a rotation around $z$ axis commutes with control operation, and a rotation around $x$ one commutes with Pauli $X$ operation. Thus, the number of one-qubit Euler rotations cannot exceed the number of C-NOTs by more than 4 times \cite{shende2004minimal}. For simplicity in this paper we address a general one-qubit operation, so we merge 2 Euler rotations acting on each two qubits where the C-NOT is applied. Thus, the complexity of the $U_\alpha^H$ implementation is less than: (i) $2n$ one-qubit rotations, (ii) $n$ C-NOTs.
\end{proof}

\subsection{From $x$ to a polynomial}

In this subsection we use the trick called qubitization \cite{gilyen2019quantum} to transform the Amplitude-oracle for the $2^n\times 2^n$ matrix $\hat{x}$ denoted as $\hat{O}_x$ to the Amplitude-oracle for a polynomial of $\hat{x}$ to $\hat{O}_{P_s}$ (Coordinate-polynomial-oracle). The main building block in this trick is so called Alternating phase modulation sequence. This unitary operation is a key instrument for building $\hat{O}_{P_s}$.

\begin{definition}[Alternating phase modulation sequence]
    \label{def:Alternating phase modulation sequence}
    Let $U$ be a $(1,m,0)$-block-encoding of hermitian matrix $A$ such that
    \[A=(\ket{0}^m\bra{0}^m\otimes I^{\otimes n})U(I^{\otimes n}\otimes\ket{0}^m\bra{0}^m);\]
    let $\Phi\in\mathbb{R}^q$, then we define the $q$-phased alternating sequence $U_\Phi$ as follows
    \[
U_{\Phi} := 
\begin{cases} 
e^{i \phi_1 (2 \Pi - I)} U \prod_{j=1}^{(q-1)/2} \left( e^{i \phi_{2j} (2 \Pi - I)} U^\dagger e^{i \phi_{2j+1} (2 \Pi - I)} U \right) & \text{if } q \text{ is odd, and} \\
\prod_{j=1}^{q/2} \left( e^{i \phi_{2j-1} (2 \Pi - I)} U^\dagger e^{i \phi_{2j} (2 \Pi - I)} U \right) & \text{if } q \text{ is even,}
\end{cases}
\]
where $2\Pi-I=(2\ket{0}^m\bra{0}^m-I^{\otimes m})\otimes I^{\otimes n}$.
\end{definition}

Now we apply the Alternating phase modulation sequence on $\hat{O}_x$. In this case $m=1$, $e^{i \phi (2 \Pi - I)}$ turns to implementing the one-qubit rotation $C^{0\dots0}_{\Phi(-2\phi)}$
\begin{equation}
    \Phi(\theta) = \begin{pmatrix}
    e^{i\theta} & 0 \\
    0 & 1
    \end{pmatrix}.
\label{eq:Rz}
\end{equation}
The quantum circuit for $U_\Phi$ for this case is depicted in Fig.~\ref{fig:U_phi}.

\begin{figure}[H]
\centering
    \includegraphics[width=0.7\textwidth]{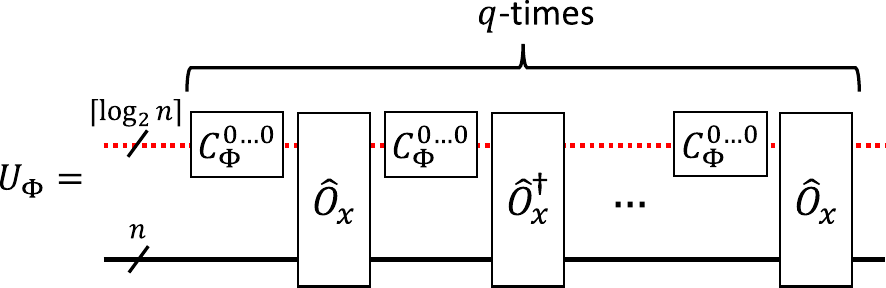}
    \caption{Alternating phase modulation sequence $U_\Phi$ for $\hat{O}_x$ with odd $q$. Workspace ancilla are shown as red dotted wires; and the system register is represented by a solid black wire.}
    \label{fig:U_phi}
    \end{figure}

Now we have all the ingredients to build the Coordinate-polynomial-oracle $\hat{O}_{P_s}$ using singular value transformation.

\begin{theorem}[Singular value transformation by real polynomials (Corollary 18 from \cite{gilyen2019quantum})]\label{theorem:Singular value transformation by real polynomials}
    Let $P_{q}(x)$ be a degree-$Q$ polynomial as in Eq.~(\ref{eq:four_polyn}) and $U_\Phi$ is the Alternating phase modulation sequence as in Definition~\ref{def:Alternating phase modulation sequence} based on Amplitude-oracle for $n$-qubit matrix $\hat{x}=diag(-1,-1+\frac{2}{2^n-1},\dots,1)$ as in Lemma~\ref{lemma coordinate oracle}. Then there exist $\Phi\in\mathbb{R}^n$, such that
    \[
    \hat{O}_{P_s} = \left(H_W\otimes I^{\otimes n+\lceil\log_2n\rceil+1} \right) \left( |0\rangle \langle 0| \otimes U_{\Phi} + |1\rangle \langle 1| \otimes U_{-\Phi} \right)\left(H_W\otimes I^{\otimes n+\lceil\log_2n\rceil+1} \right),
    \]
    where $H_W = \frac{1}{\sqrt{2}}
\begin{pmatrix}
1 & 1 \\
1 & -1
\end{pmatrix}$. $\hat{O}_{P_s}$ is Coordinate-polynomial-oracle which is a $(1,{\lceil\log_2n\rceil+1},0)$-block-encoding of $P_s(\hat{x})$:
    \begin{eqnarray}
    \hat{O}_{P_s}\ket{0}^{\lceil\log_2n\rceil+1}\ket{i}^n=(P_s(x))_{i}\ket{0}^{\lceil\log_2n\rceil+1}\ket{i}^n +J^{(i)}_{P_s}\ket{\bot_0}^{n+{\lceil\log_2n\rceil+1}}.
    \label{eq:O_Ps}
    \end{eqnarray}
    The total complexity of construction $\hat{O}_{P_s}$ does not exceed:
    \begin{enumerate}
    \item $16Qn\lceil \log_2n\rceil+8Qn+32Q\lceil \log_2n\rceil-48q+2$ one-qubit operations;
    \item $12Qn\lceil \log_2n\rceil+4Qn+24Q\lceil \log_2n\rceil-36q$ C-NOTs;
    \item $\lceil \log_2n\rceil-1$ pure ancillas.
    \end{enumerate}
    Moreover, given $P_s(x)$ and $\delta \geq 0$ we can find a $P^\prime_s(x)$ and a corresponding $\Phi$, such that $|P^\prime_s(x) - P_s(x)| \leq \delta$ for all $x \in [-1,1]$, using a classical computer in time $\mathcal{O}(\text{poly}(Q, \log(1/\delta)))$.
\end{theorem}
\begin{proof}
    The rule of choosing $\Phi$ and the general proof is given in \cite{gilyen2019quantum}. We computed complexities using Amplitude-oracle for $x$ (Lemma~\ref{lemma coordinate oracle}) as an input for $U_\Phi$ depicted in Fig.~\ref{fig:U_phi}.
\end{proof}

\begin{remark}\label{remark: O_x no need control}
    To estimate the gate complexity of the Coordinate-polynomial-oracle, we note that the LCU implementing the controlled unitaries $U_\Phi$ and $U_{-\Phi}$ admits a useful simplification. As illustrated in Fig.~\ref{fig:U_phi}, the only distinction between $\ketbra{0}{0}\!\otimes\! U_\Phi$ and $\ketbra{0}{0}\!\otimes\! U_{-\Phi}$ lies in the phase operations $C^{00\cdots 0}_\Phi$; the Amplitude-oracle $\hat{O}_x$ is identical in both branches. Hence the calls to $\hat{O}_x$ can be shared and need not be controlled by the selector qubit (whether it is in $\ket{0}$ or $\ket{1}$). Dropping these redundant controls on $\hat{O}_x$ reduces the overall LCU overhead and yields the gate counts stated above.
\end{remark}

Our construction relies on precomputed QSP phase angles for the
polynomials in Theorem~\ref{theorem:Singular value transformation by real polynomials}. We do not design new algorithms for this step,
but instead assume access to existing classical solvers for quantum signal
processing. Optimization–based methods such as \cite{QSP_angles1} and stable factorization schemes
such as Ying~\cite{QSP_angles2} provide numerically robust phase
synthesis in double precision and have been demonstrated for polynomial
degrees up to $10^{4}$. Open–source implementations, for example
the Python library \texttt{pyqsp}~\cite{pyqsp}, can be used as the
classical preprocessing backend for our scheme. A detailed analysis of the
classical cost of phase computation is beyond the scope of this work, and
we treat it as an assumed primitive.

Now we use LCU to combine four coordinate-polynomial-oracles to build the discrete version of $f(x)$ from Eq.~(\ref{eq:wave-function representation_discrete from a to b}).

\begin{theorem}[Linear combination of polynomials \{Same as Theorem~\ref{theorem:intro 1}\}]\label{theorem: Linear combination of polynomials}
Let $f(x)$ be a continuous function $f: \mathbb{R} \rightarrow \mathbb{C}$ with polynomial decomposition
\[f(x)=\sum_{i=0}^Q\sigma_ix^i=\alpha_1P_1(x)+\alpha_2P_2(x)+i\alpha_3P_3(x)+i\alpha_4P_4(x);\qquad \sigma_i\in\mathbb{C};\quad \alpha_s\in\mathbb{R},\]
where polynomials $P_s(x)$ satisfy:
\begin{itemize}
    \item $P_s(x)$ has parity-($s$ mod $2$);
    \item $P_s(x)\in\mathbb{R}[x]$;
    \item for all $x\in[-1,1]$: $\abs{P_s(x)}<1$.
\end{itemize}

Then we can construct a unitary operation $U_f$ which is $(\sum_{\kappa=0}^4\abs{\alpha_s},\lceil\log_2n\rceil+3,0)$ - block encoding of $f(\hat{x})$. $U_f$ prepares a $2^n$-discrete version of $f(x)$:
\[\ket{\psi}^n=\frac{\sum_{i=0}^Q\sum_{k=0}^{2^n-1}\sigma_ix^i_k\ket{k}^n}{\sqrt{\sum_{\kappa=0}^{2^n-1}\abs{\sum_{j=0}^Q\sigma_jx^j_\kappa}^2}}\]
with probability $P\sim\mathcal{F}$ (see Definition~\ref{definition: filling ratio}) and resources no greater than:
\begin{enumerate}
    \item $16Qn\lceil\log_2n\rceil-16Qn+32Q\lceil\log_2n\rceil+16n\lceil\log_2n\rceil+33n+120Q+2$ one-qubit operations;
    \item $12Qn\lceil\log_2n\rceil-16Qn+24Q\lceil\log_2n\rceil+12n\lceil\log_2n\rceil+26n+92Q$ C-NOTs;
    \item $\lceil\log_2n\rceil+1$ pure ancillas.
\end{enumerate}
\end{theorem}

\begin{proof}
We demonstrate the circuit design in Fig.~\ref{fig:linear_combination_of_polys}, where
\begin{equation}
    U_\alpha\ket{0}^2=\frac{1}{\sqrt{\sum_{s=1}^4\abs{\alpha_s}}}\left[\sqrt{\alpha_1}\ket{00}^2+\sqrt{\alpha_2}\ket{01}^2+e^{i\pi/4}\sqrt{\alpha_3}\ket{10}^2+e^{i\pi/4}\sqrt{\alpha_4}\ket{11}^2\right]
    \label{eq:linear_combination_of_pokys_clarification}.
\end{equation}
The operation $U_\alpha$ is an arbitrary two-qubit operation (3 C-NOTs, 6 one-qubit operations); the complexity of $\hat{O}_x$ is derived in Lemma~\ref{lemma coordinate oracle}. We highlight that during the creation of Coordinate-polynomial-oracles (see Theorem~\ref{theorem:Singular value transformation by real polynomials}) for all four polynomials it is necessary to apply $\hat{O}_x$ $Q-1$ times, so there is no need with their control versions; we apply the controlled version of $\hat{O}_x$ only once due to different parities, which is similar to Remark~\ref{remark: O_x no need control}.

The filling ratio $\mathcal{F}$ is an important quantity that
measures how the total probability amplitude is distributed
after applying the block-encodings $\hat{O}_{P_s}$ to the constant
superposition $H_W^{\otimes n}\ket{0}^n = \ket{+}^n$, as shown
in Fig.~\ref{fig:linear_combination_of_polys}, where $H_W = \frac{1}{\sqrt{2}}
\begin{pmatrix}
1 & 1 \\
1 & -1
\end{pmatrix}$. The block-encodings
$\hat{O}_{P_s}$ have matrix elements bounded by $1$, meaning that
after their action, the total probability tends to decrease;  the Fig.~\ref{fig:filling ratio} demonstrated how the applying of the $\hat{O}_{P_s}$ cut the rectangle leaving the blue area under the plot. The total probability $P\sim\mathcal{F}$ is derived from the first term in the Eq.~(\ref{eq:O_Ps}).

\begin{figure}[H]
\begin{centering}

    \includegraphics[width=0.68\textwidth]{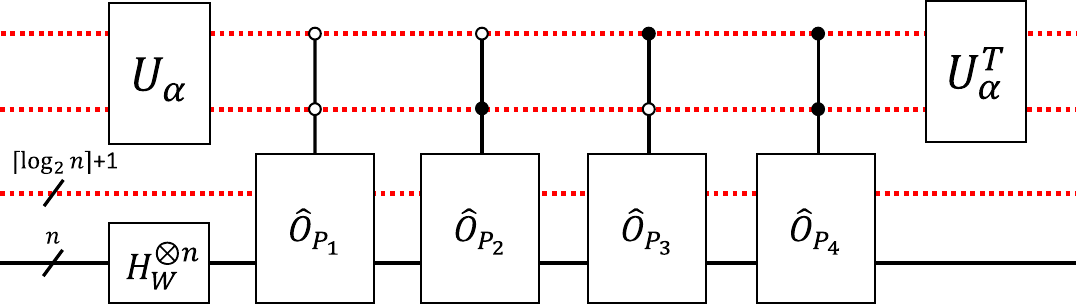}

\end{centering}
    \caption{Quantum circuit for the $2^n$-discrete version of $f(x)$. The desired function is prepared if the first three registers from the top are measured in the zero-state $\ket{0}$. The gate $U_\alpha$ is clarified in Eq.~(\ref{eq:linear_combination_of_pokys_clarification}) with a transposed version $U_\alpha^T$; the $H_W$ is Hadamard gate $H_W = \frac{1}{\sqrt{2}}
\begin{pmatrix}
1 & 1 \\
1 & -1
\end{pmatrix}.$ Workspace ancilla are shown as red dotted wires; and the system register is represented by a solid black wire.}
    \label{fig:linear_combination_of_polys}
\end{figure}
\end{proof}

\section{Piece-wise polynomial function}\label{appendix:piece-wise}

Now we generalize the pipeline described in Appendix~\ref{appendix: cooridnate-polynomial-oracle} for the more general case of non-continuous functions. We start with the simplest case of two pieces
\begin{equation}
    f(x)=\begin{cases}
\sum_{i=0}^{Q_1} \sigma_i^{(1)} x^i, & \text{if } a \leq x \leq K \\
\sum_{i=0}^{Q_2} \sigma_i^{(2)} x^i, & \text{if } b \geq x > K
\end{cases};\qquad \sigma_i^{(j)}\in\mathbb{C}
\label{eq:piece-wise}
\end{equation}
As in the previous section, our aim is to build a $2^n$-discrete version of $f(x)$ as a quantum state on a digital- or qubit-based quantum computer~\cite{FEYNMAN}.

We aim to reduce the problem to the previous case; for this purpose, we introduce one ancillary qubit which keeps the information whether $k\geq K_n$ or not, where $k$ is the number of a state in the computational basis and $K_n=\lceil(K-a)/\Delta x\rceil$ is a discrete coordinate of $K$. The initialization of this qubit (indicator qubit) is depicted in Fig.~\ref{fig:K_greater_or_not} and clarified in Fig.~\ref{fig:OR}. The complexity of this procedure is no greater than 
\begin{enumerate}
    \item $16n+34$ one-qubit operations,
    \item $12n-4$ C-NOTs,
    \item $n-1$ pure ancillas.
\end{enumerate}

\begin{figure}[h]
    \includegraphics[width=0.5\textwidth]{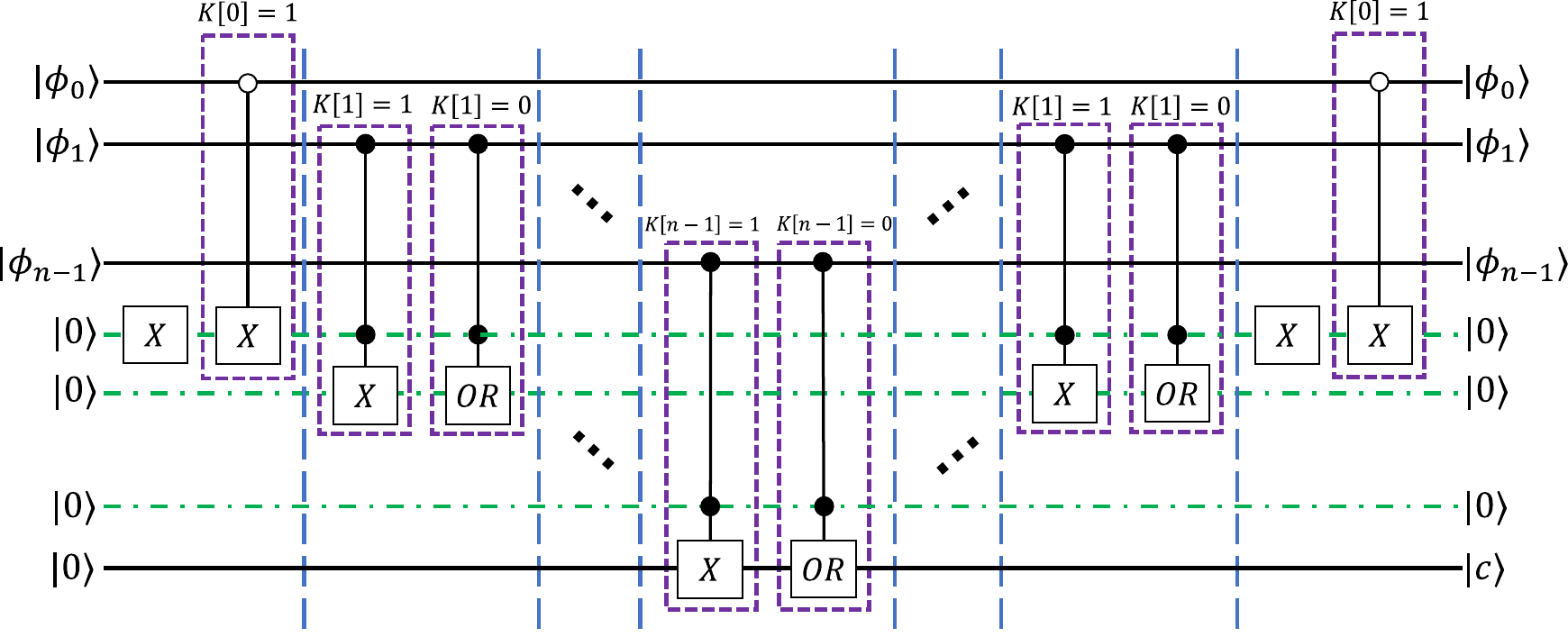}
    \caption{Circuit design for initializing the indicator qubit. The whole circuit can be understood as a classical comparator as it compares the state (computational basis) in the upper register with $K$, setting the last qubit in the state $\ket{c}=\ket{1}$ if $\phi\geq K_n$. The quantum circuit uses $n-1$ pure ancillas. A classical array $K[i]$ holds the binary representation of $K_n$; the value of $K[i]$ determines which gate is applied. Wires for pure ancilla are depicted as green dash-dotted lines; and the system register is represented by a solid black wire.}
    \label{fig:K_greater_or_not}
\end{figure}

\begin{figure}[h]
    \includegraphics[width=0.4\textwidth]{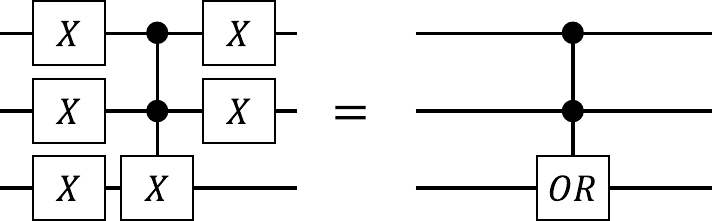}
    \caption{The representation of the $OR$ gate used in Fig.~\ref{fig:K_greater_or_not}.}
    \label{fig:OR}
\end{figure}

Using this auxiliary feature, we can prepare the $2^n$-discrete version of $f(x)$ from Eq.~(\ref{eq:piece-wise}). The corresponding circuit is depicted in Fig.~\ref{fig:piece_wise_example}. By induction we prove Theorem~\ref{theorem: last theorem}.

\begin{figure}[t]
    \includegraphics[width=0.5\textwidth]{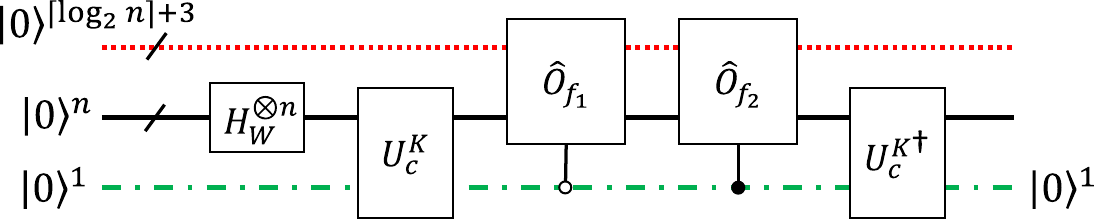}
    \caption{Circuit design for preparing a piece-wise function (2 pieces) from Eq.~(\ref{eq:piece-wise}). Here the unitary $U_c^K$ sets the last qubit in state $\ket{1}$ if the state in the middle $n$-qubit register is greater than $K_n$; the circuit design is depicted in Fig.~\ref{fig:K_greater_or_not}. The oracles $\hat{O}_{f_j}$ are a controlled version of quantum circuits from Theorem~\ref{theorem: Linear combination of polynomials} but without Hadamard gates $H_W^{\otimes n}$ (see Fig.~\ref{fig:linear_combination_of_polys}). The unitary $U_c^{K\dagger}$ returns the ancilla back to the zero state, making it a pure ancilla (see Definition~\ref{def:pure ancilla}). Wires for pure ancilla are depicted as green dash-dotted lines; workspace ancilla are shown as red dotted wires; and the system register is represented by a solid black wire.}
    \label{fig:piece_wise_example}
\end{figure}

\begin{theorem}[Piece-wise polynomial function \{Same as Theorem~\ref{theorem:intro 2}\}]\label{theorem: last theorem}
Let $f(x)$ be a piece-wise continuous function $f: \mathbb{R} \rightarrow \mathbb{C}$ with polynomial decomposition
\[
f(x)=\begin{cases} 
f_1(x) = \sum_{i=0}^{Q_1} \sigma_i^{(1)} x^i = \alpha_1^{(1)} P_1^{(1)}(x) + \alpha_2^{(1)} P_2^{(1)}(x) + i \alpha_3^{(1)} P_3^{(1)}(x) + i \alpha_4^{(1)} P_4^{(1)}(x), & \text{if } a \leq x \leq K_1 \\
f_2(x) = \sum_{i=0}^{Q_2} \sigma_i^{(2)} x^i = \alpha_1^{(2)} P_1^{(2)}(x) + \alpha_2^{(2)} P_2^{(2)}(x) + i \alpha_3^{(2)} P_3^{(2)}(x) + i \alpha_4^{(2)} P_4^{(2)}(x), & \text{if } K_2 \geq x > K_1 \\
\multicolumn{2}{c}{\vdots} \\
f_G(x) = \sum_{i=0}^{Q_G} \sigma_i^{(G)} x^i = \alpha_1^{(G)} P_1^{(G)}(x) + \alpha_2^{(G)} P_2^{(G)}(x) + i \alpha_3^{(G)} P_3^{(G)}(x) + i \alpha_4^{(G)} P_4^{(G)}(x), & \text{if } b \geq x > K_{G-1}
\end{cases};
\]
\[\sigma_i^{(j)}\in\mathbb{C};\quad \alpha_s^{(j)}\in\mathbb{R},\]
where polynomials $P^{(j)}_s(x)$ satisfy:
\begin{itemize}
    \item $P^{(j)}_s(x)$ has parity-($s$ mod 2);
    \item $P^{(j)}_s(x)\in\mathbb{R}[x]$;
    \item For all $x\in[-1,1]$: $\abs{P^{(j)}_s}<1$.
\end{itemize}

Then we can construct a unitary operation $U_f$ which is a block-encoding of $f(\hat{x})$ that prepares a $2^n$-discrete version of $f(x)$ with probability $P\sim\mathcal{F}$ and resources no greater than:
\begin{enumerate}
    \item $\mathcal{O}(Q_{\max}n\log_2n+Gn+Q_{\max}G)$ quantum gates;
    \item $n+\lceil \log_2 G\rceil-1$ pure ancillas.
\end{enumerate}

The number of workspace ancillas is $\lceil\log_2n\rceil+3$; and $Q_{\max}=\max_gQ_g$.
\end{theorem}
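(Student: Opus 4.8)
The plan is to reduce the $G$-piece problem to $G$ essentially independent instances of the single-piece construction of Theorem~\ref{theorem: Linear combination of polynomials}, glued together by an auxiliary register that records, for each computational-basis index $k$, the unique piece $g$ to which the grid point $x_k$ belongs. I would carry this out by induction on the number of pieces $G$, taking the single-piece case of Theorem~\ref{theorem: Linear combination of polynomials} as the base case and the explicit two-piece circuit of Fig.~\ref{fig:piece_wise_example} as the template for the inductive step, in which adding one piece amounts to adding one threshold comparator and one controlled diagonal oracle.

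Concretely, I would first prepare the uniform superposition $H_W^{\otimes n}\ket{0}^n\propto\sum_k\ket{k}^n$ on the main register, factoring the Hadamard--Walsh layer out of the single-piece circuit exactly as indicated in the caption of Fig.~\ref{fig:piece_wise_example}. Next, using the comparator of Fig.~\ref{fig:K_greater_or_not} applied successively to the thresholds $K_1,\dots,K_{G-1}$, I would compute into an index register the number of thresholds that $k$ exceeds; since the thresholds are ordered, this value equals $g(k)-1$ and thereby labels each $\ket{k}^n$ with its piece. I would then apply, for each $g=1,\dots,G$, the controlled oracle $\hat{O}_{f_g}$ of Theorem~\ref{theorem: Linear combination of polynomials} (stripped of its own Hadamard--Walsh layer) conditioned on the index register reading $g$; because the pieces are mutually exclusive, the amplitude at $\ket{k}^n$ is multiplied by exactly $f_{g(k)}(x_k)$. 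Finally I would run the comparators in reverse, disentangling and resetting the index register to $\ket{0}$ so that it qualifies as pure ancilla in the sense of Definition~\ref{def:pure ancilla}; measuring the block-encoding flags in zero then yields the target discretized state.

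For the resource count, the Hadamard--Walsh layer costs $\mathcal{O}(n)$ gates; the forward and backward comparator passes cost $\mathcal{O}(Gn)$ gates (each comparator being $\mathcal{O}(n)$ by the stated $16n+34$ one-qubit, $12n-4$ C-NOT count) together with a $\lceil\log_2 G\rceil$-qubit index register; and each controlled oracle $\hat{O}_{f_g}$ costs $\mathcal{O}(Q_g n^2)$ by Theorem~\ref{theorem: Linear combination of polynomials}, its controlled form inflating the count only by a constant factor plus an $\mathcal{O}(\log G)$ control overhead per gate. Summing yields $\mathcal{O}\!\left(Gn+\sum_{g=1}^G Q_g n^2\right)=\mathcal{O}\!\left(\sum_{g=1}^G Q_g n^2\right)$, since $Q_g\geq 1$ forces $\sum_g Q_g n^2\geq Gn^2\geq Gn$.

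The main obstacle I anticipate is keeping the pure-ancilla budget at exactly $2n-1$ and the success probability at $\mathcal{O}(1)$, both independent of $G$. For the ancillas, the key observation is that each oracle's own $2n-1$ workspace qubits are in $\ket{0}$ immediately before and after the oracle call, so the comparator may borrow $n-1$ of them and the index register may occupy the remaining room, which fits whenever $\lceil\log_2 G\rceil\leq n$ (automatic, as one cannot have more pieces than grid points, $G\leq 2^n$). For the probability, I would argue that because the pieces partition the grid and each single-piece block-encoding succeeds with probability bounded below by a constant under the shared normalization fixed in Theorem~\ref{theorem: Linear combination of polynomials}, the multiplexed construction inherits an $\mathcal{O}(1)$ success probability; the delicate point the induction must track is that the subnormalizations of the individual pieces do not conspire across segments to suppress the overall amplitude.
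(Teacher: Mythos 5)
Your proposal follows essentially the same route as the paper: a comparator-initialized index register labelling each grid point with its piece, controlled applications of the single-piece oracle from Theorem~\ref{theorem: Linear combination of polynomials} with the Hadamard--Walsh layer factored out, uncomputation of the comparators so the index register is a pure ancilla, and induction on $G$ for the gate count; indeed you are more explicit than the paper about the $\lceil\log_2 G\rceil$-qubit index register, the resource arithmetic, and the normalization caveat for the $\mathcal{O}(1)$ success probability. The only soft spot --- that the index register must stay live while each oracle uses its own $2n-1$ workspace qubits, so it cannot literally be borrowed from that pool --- is a bookkeeping issue the paper's own proof glosses over in exactly the same way.
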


\begin{remark}[Why the costs split as $Q\,n\log n\;+\;G n\;+\;Q_{\max}G$]
\label{rem:reuse-x-oracle}
A key point is that all piece oracles 
$\hat{O}_{f_1},\hat{O}_{f_2},\dots,\hat{O}_{f_G}$ 
\emph{reuse the same} Amplitude-oracle for the coordinate operator $\hat{O}_x$. 
Within each $\hat{O}_{f_i}$ (see Fig.~\ref{fig:piece_wise_example}), the calls to $\hat{O}_x$ appear as a fixed subsequence that does \emph{not} depend on the state of the final indicator qubit. 
Hence these $\hat{O}_x$ invocations need not be controlled by the indicator; only their \emph{presence} and ordering are fixed per piece.

This observation explains the decomposition of our gate complexity:
\begin{enumerate}
  \item \textbf{$Q_{\max}\,n\log n$ term.} Because the same $\hat{O}_x$ sequence is reused across all pieces, the total cost to implement $G$ alternating phase modulation sequences (Def.~\ref{def:Alternating phase modulation sequence}) is dominated by repeatedly applying the longest \emph{uncontrolled} coordinate oracle sequence, yielding the $Q_{max}\,n\log n$ factor.

  \item \textbf{$G n$ term.} For piecewise application we execute the segment selector $U^{K}_{c}$ once per segment to prepare/unprepare the indicator (see Fig.~\ref{fig:piece_wise_example}), incurring an additional linear overhead in the number of segments, i.e., $G n$.

  \item \textbf{$Q_{\max}G$ term.} The only place where the indicator qubit must control anything is on the \emph{phase angles} of the controlled-phase operator $C^{00\cdots 0}_{\Phi}$ inside the alternating sequence (Def.~\ref{def:Alternating phase modulation sequence}). 
  Controlling these angles per segment contributes an additive $Q_{\max}G$ overhead, where $Q_{\max}$ is the largest polynomial degree used by any piece.
\end{enumerate}
\end{remark}

\end{document}